\newtheorem{thm}{Theorem}[section]
\newtheorem{lma}[thm]{Lemma}
\newtheorem{prop}[thm]{Proposition}
\newtheorem*{adef}{Definition}
\DeclareMathOperator{\bin}{Bin}
\numberwithin{equation}{section}
\begin{document}

\newcommand{\ind}{\textbf{p-\#}\textsc{Independent Set}}
\newcommand{\mind}[0]{\textbf{p-\#}\textsc{Multicolour Independent Set}}
\newcommand{\mconn}[0]{\textbf{p-\#}\textsc{Multicolour Connected Subset}}
\newcommand{\conn}[0]{\textbf{p-\#}\textsc{Connected Subset}}
\newcommand{\leqfpt}{ $\leq^{\textrm{fpt}}$ }
\newcommand{\Q}{|\mathcal{Q}|}
\newcommand{\paramcount}[1]{\textup{\textbf{p-\#}}\textsc{#1}}
\newcommand{\paramdec}[1]{\textup{\textbf{p-}}\textsc{#1}}
\newcommand{\good}{good}

\title{Randomised enumeration of small witnesses using a decision oracle}
\author{Kitty Meeks\thanks{The author is supported by a Royal Society of Edinburgh Personal Research Fellowship, funded by the Scottish Government.}\\
\small{School of Computing Science, University of Glasgow} \\ 
\texttt{\small{kitty.meeks@glasgow.ac.uk}}}

\date{January 2018}

\maketitle

\begin{abstract}
Many combinatorial problems involve determining whether a universe of $n$ elements contains a witness consisting of $k$ elements which have some specified property.  In this paper we investigate the relationship between the decision and enumeration versions of such problems: efficient methods are known for transforming a decision algorithm into a search procedure that finds a single witness, but even finding a second witness is not so straightforward in general.  We show that, if the decision version of the problem can be solved in time $f(k) \cdot poly(n)$, there is a randomised algorithm which enumerates all witnesses in time $e^{k + o(k)} \cdot f(k) \cdot poly(n) \cdot N$, where $N$ is the total number of witnesses. If the decision version of the problem is solved by a randomised algorithm which may return false negatives, then the same method allows us to output a list of witnesses in which any given witness will be included with high probability.  The enumeration algorithm also gives rise to an efficient algorithm to count the total number of witnesses when this number is small.
\end{abstract}

\section{Introduction}

Many well-known combinatorial decision problems involve determining whether a universe $U$ of $n$ elements contains a witness $W$ consisting of \emph{exactly} $k$ elements which have some specified property; we refer to such problems as $k$-witness problems.  Although the decision problems themselves are of interest, it is often not sufficient for applications to output simply ``yes'' or ``no'': we need to \emph{find} one or more witnesses.  The issue of finding a single witness using an oracle for the decision problem has previously been investigated by Bj\"{o}rklund, Kaski, and Kowalik \cite{bjorklund14}, motivated by the fact that the fastest known parameterised algorithms for a number of widely studied problems (such as graph motif \cite{bjorklund13} and $k$-path \cite{bjorklund10}) are non-constructive in nature.  Moreover, for some problems (such as $k$-\textsc{Clique or Independent Set} \cite{arvind02} and $k$-\textsc{Even Subgraph} \cite{even}) the only known FPT decision algorithm relies on a Ramsey theoretic argument which says the answer must be ``yes'' provided that the input graph avoids certain easily recognisable structures.

Following the first approach used in \cite{bjorklund14}, we begin by assuming the existence of a deterministic inclusion oracle (a black-box decision procedure), as follows.
\\

\noindent
\textbf{INC-ORA($X$, $U$, $k$)}\\
\hangindent = \parindent
\textit{Input:} $X \subseteq U$ and $k \in \mathbb{N}$\\
\textit{Output:} 1 if some witness of size $k$ in $U$ is entirely contained in $X$; 0 otherwise.
\\

\noindent
Such an inclusion oracle can easily be obtained from an algorithm for the basic decision problem in the case of \emph{self-contained} $k$-witness problems, where we only have to examine the elements of a $k$-element subset (and the relationships between them) to determine whether they form a witness: we simply call the decision procedure on the substructure induced by $X$.  Examples of $k$-witness problems that are self-contained in this sense include those of determining whether a graph contains a $k$-vertex subgraph with some property, such as the well-studied problems $k$-\textsc{Path}, $k$-\textsc{Cycle} and $k$-\textsc{Clique}; algorithms to \emph{count} the number of witnesses in problems of this form have been designed for applications ranging from the analysis of biological networks \cite{milo02} to the design of network security tools \cite{gelbord01,sekar04,staniford-chen96}.

Given access to an oracle of this kind, a na\"{i}ve approach easily finds a single witness using $\Theta(n)$ calls to  \textbf{INC-ORA}: we successively delete elements of the universe, following each deletion with an oracle call, and if the oracle answers ``no'' we reinsert the last deleted element and continue.  Assuming we start with a yes-instance, this process will terminate when only $k$ elements remain, and these $k$ elements must form a witness.  In \cite{bjorklund14}, ideas from combinatorial group testing are used to make a substantial improvement on this strategy for the extraction of a single witness: rather than deleting a single element at a time, large subsets are discarded (if possible) at each stage.  This gives an algorithm that extracts a witness with only $2k\left(\log_2\left(\frac{n}{k}\right)+2\right)$ oracle queries.

However, neither of these approaches for finding a single witness can immediately be extended to find \emph{all} witnesses, a problem which is of interest even if an efficient decision algorithm does output a single witness.  Both approaches for finding a first witness rely on the fact that we can safely delete some subset of elements from our universe provided we know that what is left still contains at least one witness; if we need to look for a second witness, the knowledge that at least one witness will remain is no longer sufficient to guarantee we can delete a given subset.  Of course, for any $k$-witness problem we can check all possible subsets of size $k$, and hence enumerate all witnesses, in time $O(n^k)$; indeed, if \emph{every} set of $k$ vertices is in fact a witness then we will require this amount of time simply to list them all.  However, we can seek to do much better than this when the number of witnesses is small by making use of a decision oracle.

The enumeration problem becomes straightforward if we have an \emph{extension oracle},\footnote{Such an oracle is sometimes called an \emph{interval} oracle, as in the enumeration procedure described by Bj\"{o}rklund, Kaski, Kowalik and Lauri \cite{bjorklund15} which builds on earlier work by Lawler \cite{lawler72}.} defined as follows.
\\

\noindent
\textbf{EXT-ORA($X$,$Y$,$U$,$k$)}\\
\hangindent = \parindent
\textit{Input:} $X \subseteq U$, $Y \subseteq X$, and $k \in \mathbb{N}$\\
\textit{Output:} 1 if there exists a witness $W$ with $Y \subseteq W \subseteq X$; 0 otherwise.
\\

\noindent
The existence of an efficient procedure \textbf{EXT-ORA}($X$,$Y$,$U$,$k$) for a given self-contained $k$-witness problem allows us to use standard backtracking techniques to devise an efficient enumeration algorithm.  We explore a binary search tree of depth $O(n)$, branching at level $i$ of the tree on whether the $i^{th}$ element of $U$ belongs to the solution.  Each node in the search tree then corresponds to a specific pair $(X,Y)$ with $Y \subseteq X \subseteq U$; we can call \textbf{EXT-ORA($X$,$Y$,$U$,$k$)} to determine whether any descendant of a given node corresponds to a witness.  Pruning the search tree in this way ensures that no more than $O(n \cdot N)$ oracle calls are required, where $N$ is the total number of witnesses.

Note that, with only the inclusion oracle, we can determine whether there is a witness that does \emph{not} contain some element $x$ (we simply call \textbf{INC-ORA}($U \setminus \{x\}$, $U$, $k$)), but we cannot determine whether there is a witness which \emph{does} contain $x$.   Moreover, as we will show in Section \ref{sec:self-reduc}, there are natural (self-contained) $k$-witness problems for which the inclusion problem can be solved efficiently but there is no fpt-algorithm for the extension decision problem unless FPT=W[1].  This motivates the development of enumeration algorithms that do not rely on such an oracle.

The main result of this paper is just such an algorithm; specifically, we prove the following theorem.
\begin{thm}
There is a randomised algorithm to enumerate all witnesses of size $k$ in a $k$-witness problem exactly once, whose expected number of calls to a deterministic inclusion oracle is at most $e^{k + o(k)} \log^2 n \cdot N$, where $N$ is the total number of witnesses.  If an oracle call can be executed in time $g(k)\cdot n^{O(1)}$ for some computable function $g$, then the expected total running time of the algorithm is 
$$e^{k + o(k)} \cdot g(k) \cdot n^{O(1)} \cdot N.$$
Moreover, the total space required by the algorithm is at most $e^{k + o(k)} \cdot n^{O(1)}$.
\label{thm:enumeration}
\end{thm}
The key tool we use to obtain this algorithm is a colour coding method, using a family of $k$-perfect hash functions.  This technique was introduced by Alon, Yuster and Zwick in \cite{alon95} and has been widely used in the design of parameterised algorithms for decision and approximate counting (see for example \cite[Chapters 13 and 14]{flumgrohe} and \cite[Chapter 8]{downeyfellows13}), but to the best of the author's knowledge has not yet been applied to enumeration problems.

The main limitation of Theorem \ref{thm:enumeration} is that it requires access to a deterministic inclusion oracle \textbf{INC-ORA} which always returns the correct answer.  However, in a number of cases (including $k$-\textsc{Path} \cite{bjorklund10} and \textsc{Graph Motif} \cite{bjorklund13}) the fastest known decision algorithm for a self-contained $k$-witness problem (and hence for the corresponding inclusion problem) is in fact randomised and has a small probability of returning an incorrect answer.  We will also show that the same algorithm can be used in this case, at the expense of a small increase in the expected running time (if the oracle can return false positives) and the loss of the guarantee that we will output every witness exactly once: for each witness in the instance, there is a small probability that we will omit it from the list due to the oracle returning false negatives.  Specifically, we prove the following theorem.

\begin{thm}
Given a randomised inclusion oracle for the $k$-witness problem $\Pi$, whose probability of returning an incorrect answer is at most $c<\frac{1}{2}$, there is a randomised algorithm which takes as input an instance of $\Pi$ and a constant $\epsilon > 0$, and outputs a list of witnesses of size $k$ in the instance such that no witness appears more than once and, for any witness $W$, the probability that $W$ is included in the list is at least $1 - \epsilon$.  In expectation, the algorithm makes at most $e^{k + o(k)}\cdot \log(\epsilon^{-1}) \cdot \log^3 n \, (\log\log n) \cdot N$ oracle calls, where $N$ is the total number of witnesses in the instance, and if an oracle call can be executed in time $g(k) \cdot n^{O(1)}$ for some computable function $g$, then the expected running time of the algorithm is
$$e^{k + o(k)} \cdot \log(\epsilon^{-1} \cdot g(k) \cdot n^{O(1)} \cdot N.$$
Moreover, the total space required by the algorithm is $e^{k+o(k)} \cdot n^{O(1)}$.
\label{thm:enumeration-random}
\end{thm}

This result initiates the study of approximate algorithms for enumeration problems: in contrast with the well-established field of approximate counting, this relaxation of the requirements for enumeration does not seem to have been addressed in the literature to date.  

In the study of counting complexity it is standard practice, when faced with a $\#\P$-hard problem, to investigate whether there is an efficient method to solve the counting problem \emph{approximately}.  The answer to this question is considered to be ``yes'' if and only if the problem admits a \emph{fully polynomial randomised approximation scheme} (FPRAS), defined as follows.

\begin{adef}
An FPRAS for a counting problem $\Pi$ is a randomised approximation scheme that takes an instance $I$ of $\Pi$ (with $|I| = n$), and real numbers $\epsilon > 0$ and $0 < \delta < 1$, and in time $poly(n,1/\epsilon,\log(1/\delta))$ outputs a rational number $z$ such that
$$\mathbb{P}[(1-\epsilon)\Pi(I) \leq z \leq (1 + \epsilon)\Pi(I)] \geq 1 - \delta.$$
\end{adef}

\noindent
In the parameterised setting, the analogue of this is a \emph{fixed parameter tractable randomised approximation scheme} (FPTRAS), in which the running time is additionally allowed to depend arbitrarily on the parameter.

Perhaps the most obvious way to translate this notion in to the setting of enumeration would be to look for an algorithm which, with probability at least $(1 - \delta)$, would output at least $(1-\epsilon)$-proportion of all witnesses.  In the setting of counting, all witnesses are essentially interchangeable, so it makes sense to consider only the total number of objects counted in relation to the true answer.  However, this definition perhaps allows too much freedom in the setting of enumeration: we could design an algorithm which satisfies these requirements and yet will never output some collection of hard-to-find witnesses, so long as this collection is not too large compared with the total number of witnesses. 

Instead, we propose here a more demanding notion of approximate enumeration: given $\epsilon > 0$, we want a (randomised) algorithm such that, for any witness $W$, the probability we output $W$ is at least $1-\epsilon$.  This implies that we will, with high probability (depending on $\epsilon$) output a large proportion of all possible witnesses, but also ensures that we cannot choose to ignore certain potential witnesses altogether.  It may also be desirable to permit a witness to be repeated in the output with small probability: we can allow this flexibility by requiring only that, for each witness $W$, the probability that $W$ is included in the output exactly once is at least $1 - \epsilon$.  We give a formal definition of efficient approximate enumeration in Section \ref{sec:param}.

Theorem \ref{thm:enumeration} is proved in Section \ref{sec:algorithm}, and Theorem \ref{thm:enumeration-random} in Section \ref{sec:approx}.  We then discuss some implications of our enumeration algorithms for the complexity of related counting problems in Section \ref{sec:counting}.  We begin in Section \ref{sec:param} with some background on relevant complexity theoretic notions, before discussing the hardness of the extension version of some self-contained $k$-witness problems in Section \ref{sec:self-reduc}.

\section{Parameterised enumeration}
\label{sec:param}

There are two natural measures of the size of a self-contained $k$-witness problem, namely the number of elements $n$ in the universe and the number of elements $k$ in each witness, so the running time of algorithms is most naturally discussed in the setting of parameterised complexity.  There are two main complexity issues to consider in the present setting: first of all, as usual, the running time, and secondly the number of oracle calls required.

For general background on the theory of parameterised complexity, we refer the reader to \cite{downeyfellows13,flumgrohe}.  The theory of parameterised enumeration has been developed relatively recently \cite{fernau02,creignou13,creignou15}, and we refer the reader to \cite{creignou13} for the formal definitions of the different classes of parameterised enumeration algorithms.  To the best of the author's knowledge, this is the first occurrence of a randomised parameterised enumeration algorithm in the literature, and so we introduce randomised analogues of the four types of parameterised enumeration algorithms introduced in \cite{creignou13} (for a problem with total input size $n$ and parameter $k$, and with $f:\mathbb{N} \rightarrow \mathbb{N}$ assumed to be a computable function throughout): 
\begin{itemize}
\item an expected-total-fpt algorithm enumerates all solutions and terminates in expected time $f(k)\cdot n^{O(1)}$;
\item an expected-delay-fpt algorithm enumerates all solutions with expected delay at most $f(k) \cdot n^{O(1)}$ between the times at which one solution and the next are output (and the same bound applies to the time before outputting the first solution, and between outputting the final solution and terminating);
\item an expected-incremental-fpt algorithm enumerates all solutions with expected delay at most $f(k) \cdot (n + i)^{O(1)}$ between outputting the $i^{th}$ and $(i+1)^{th}$ solution;
\item an expected-output-fpt algorithm enumerates all solutions and terminates in expected time $f(k) \cdot (n+N)^{O(1)}$, where $N$ is the total number of solutions enumerated.
\end{itemize}

Under these definitions, Theorem \ref{thm:enumeration} says that, if there is an FPT decision algorithm for the inclusion version of a $k$-witness problem, then there is an expected-output-fpt algorithm for the corresponding enumeration problem.

In the setting of approximate enumeration, we define a \emph{fully output polynomial randomised enumeration scheme} (FOPRES) to be an algorithm which, given an instance $I$ of an enumeration problem (with total input size $n$) and a rational $\epsilon \in (0,1)$, outputs, in time bounded by a polynomial function of $n$, $N$ and $\epsilon^{-1}$ (where $N$ is the total number of solutions to $I$), a list of solutions to $I$ with the property that, for any solution $W$, the probability that $W$ appears exactly once in the list is at least $1 - \epsilon$.  In the parameterised setting, we analogously define a \emph{fully output fpt randomised enumeration scheme} (FOFPTRES) to be an algorithm which, given an instance $I$ of a parameterised enumeration problem (with total input size $n$ and parameter $k$) and a rational $\epsilon \in (0,1)$, outputs, in time bounded by $f(k)\cdot p(n,N,\epsilon^{-1})$, where $p$ is a polynomial, $f$ is any computable function, and $N$ is the total number of solutions to $I$, a list of solutions to $I$ with the property that, for any solution $W$, the probability that $W$ appears exactly once in the list is at least $1 - \epsilon$.  An \emph{expected}-FOPRES (respectively \emph{expected}-FOFPTRES) is a randomised algorithm which satisfies the definition of a FOPRES (resp. FOFPTRES) if we replace the condition on the running time by the same condition on the expected running time.  We can make analogous definitions for total-polynomial, total-fpt, delay-polynomial etc.

Under these definitions, Theorem \ref{thm:enumeration-random} says that, if there is a randomised FPT decision algorithm for the inclusion version of a $k$-witness problem with error probability less than a half, then the corresponding enumeration problem admits a FOFPTRES.

\section{Hardness of the extension problem}
\label{sec:self-reduc}

Many combinatorial problems have a very useful property, often referred to as \emph{self-reducibility}, which allows a search or enumeration problem to be reduced to (smaller instances of) the corresponding decision problem in a very natural way (see \cite{creignou13,khuller91,schnorr76}).  A problem is self-reducible in this sense if the existence of an efficient decision procedure (answering the question: ``Does the universe contain at least one witness of size $k$?") implies that there is an efficient algorithm to solve the extension decision problem (equivalent to \textbf{EXT-ORA}).  While many self-contained $k$-witness problems do have this property, we will demonstrate that there exist self-contained $k$-witness problems that do not (unless FPT=W[1]), and so an enumeration procedure that makes use only of \textbf{INC-ORA} and not \textbf{EXT-ORA} is desirable.

In order to demonstrate this, we show that there exist self-contained $k$-witness problems whose decision versions belong to FPT, but for which the corresponding extension decision problem is W[1]-hard.  We will consider the following problem, which is clearly a self-contained $k$-witness problem.
\\

\hangindent=\parindent
$k$-\textsc{Clique or Independent Set} \\
\textit{Input:} A graph $G = (V,E)$ and $k \in \mathbb{N}$.\\
\textit{Parameter:} $k$. \\
\textit{Question:} Is there a $k$-vertex subset of $V$ that induces either a clique or an independent set? \\

\noindent
This problem is known to belong to FPT \cite{arvind02}: any graph with at least $2^{2k}$ vertices must be a yes-instance by Ramsey's Theorem.  We now turn our attention to the extension version of the problem, defined as follows.
\\

\hangindent=\parindent
$k$-\textsc{Extension Clique or Independent Set} \\
\textit{Input:} A graph $G = (V,E)$, a subset $U \subseteq V$ and $k \in \mathbb{N}$.\\
\textit{Parameter:} $k$. \\
\textit{Question:} Is there a $k$-vertex subset $S$ of $V$, with $U \subseteq S$, that induces either a clique or an independent set? \\

\noindent
It is straightforward to adapt the hardness proof for $k$-\textsc{Multicolour Clique or Independent Set} \cite[Proposition 3.7]{treewidth} to show that $k$-\textsc{Extension Clique or Independent Set} is W[1]-hard.

\begin{prop}
$k$-\textsc{Extension Clique or Independent Set} is W[1]-hard.
\label{prop:ext-cliqueIS-hard}
\end{prop}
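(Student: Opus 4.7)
The plan is to give an fpt-reduction from the W[1]-hard problem $k$-\textsc{Clique} to \paramdec{Extension Clique or Independent Set}. Given an instance $(G, k)$ of $k$-\textsc{Clique} with $k \geq 2$, I would construct a graph $G'$ on the vertex set $V(G) \cup \{u_1, \ldots, u_k\}$, where the $u_i$ are $k$ fresh vertices, with edge set $E(G) \cup \{u_iu_j : i \neq j\} \cup \{u_iv : i \in [k],\, v \in V(G)\}$; in other words, the new vertices form a clique that is completely joined to the whole of $V(G)$. The reduction then outputs $(G', U, 2k)$ with $U = \{u_1,\dots,u_k\}$, so the new parameter is $2k$, a computable function of $k$, and the construction is obviously polynomial-time.

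For correctness, first I would verify the forward direction: any $k$-clique $C \subseteq V(G)$ yields an extension $C \cup U$ of size $2k$ that induces a clique in $G'$, since $U$ is a clique, $C$ remains a clique (no edges between old vertices were added or removed), and every $u_i$ is joined to every vertex of $C$ by construction. For the reverse direction, suppose $S \supseteq U$ is an extension of size $2k$ that induces either a clique or an independent set in $G'$. Because $U \subseteq S$, $|U| = k \geq 2$, and $U$ is a clique in $G'$, the set $S$ cannot be an independent set, so $S$ must induce a clique; then $S \setminus U$ is a set of exactly $k$ vertices of $G$ that are pairwise adjacent in $G'$, and since adjacency between original vertices was preserved, they form a $k$-clique in $G$.

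Taken together, $(G, k)$ is a yes-instance of $k$-\textsc{Clique} iff $(G', U, 2k)$ is a yes-instance of \paramdec{Extension Clique or Independent Set}, which combined with the W[1]-hardness of $k$-\textsc{Clique} yields the claim. I do not foresee a genuine technical obstacle; the one point that requires care is arranging the construction so that the independent-set branch of the disjunction cannot be satisfied in a way that is decoupled from the clique structure of $G$, and forcing $U$ itself to be a clique of size at least two eliminates this possibility in a single line and is the central design idea of the reduction.
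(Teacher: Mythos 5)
Your proof is correct and takes essentially the same approach as the paper's: reduce from $k$-\textsc{Clique} by adding new vertices adjacent to all of $V(G)$, designate them as the required set $U$, and observe that their presence rules out the independent-set branch so that any valid extension must induce a clique. The only difference is cosmetic: the paper adds a single new universal vertex $v$ and asks for an extension of $\{v\}$ to size $k+1$ (which already forbids the independent-set branch for $k\geq 1$, since $v$ together with any one original vertex yields an edge), whereas you add $k$ new universal vertices forming a clique and ask for an extension of size $2k$, assuming $k\geq 2$; both are valid fpt-reductions, and the paper's version is slightly more economical.
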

\begin{proof}
We prove this result by means of a reduction from the W[1]-complete problem $k$-\textsc{Clique}.  Let $(G,k)$ be the input to an instance of $k$-\textsc{Clique}.  We now define a new graph $G'$, obtained from $G$ by adding one new vertex $v$, and an edge from $v$ to every vertex $u \in V(G)$.  It is then straightforward to verify that $(G',\{v\},k+1)$ is a yes-instance for $k$-\textsc{Extension Clique or Independent Set} if and only if $G$ contains a clique of size $k$. 
\end{proof}

This demonstrates that $k$-\textsc{Extension Clique or Independent Set} is a problem for which there exists an efficient decision procedure but no efficient algorithm for the extension version of the decision problem (unless FPT=W[1]).  Both of these arguments (inclusion of the decision problem in FPT, and hardness of the extension version) can easily be adapted to demonstrate that the following problem exhibits the same behaviour.
\\

\hangindent=\parindent
$k$-\textsc{Induced Regular Subgraph} \\
\textit{Input:} A graph $G = (V,E)$ and $k \in \mathbb{N}$.\\
\textit{Parameter:} $k$. \\
\textit{Question:} Is there a $k$-vertex subset of $V$ that induces a subgraph in which every vertex has the same degree? \\

\noindent
Indeed, the same method can be applied to any problem in which putting a restriction on the degree of one of the vertices in the witness guarantees that the witness induces a clique (or some other induced subgraph for which it is W[1]-hard to decide inclusion in an arbitrary input graph).

\section{The randomised enumeration algorithm}
\label{sec:algorithm}

In this section we describe our randomised witness enumeration algorithm and analyse its performance when used with a deterministic oracle, thus proving Theorem \ref{thm:enumeration}.

As mentioned above, our algorithm relies on a colour coding technique.  A family $\mathcal{F}$ of hash functions from $[n]$ to $[k]$ is said to be \emph{$k$-perfect} if, for every subset $A \subset [n]$ of size $k$, there exists $f \in \mathcal{F}$ such that the restriction of $f$ to $A$ is injective.  We will use the following bound on the size of such a family of hash functions.

\begin{thm}\cite{naor95}
For all $n, k \in \mathbb{N}$ there is a $k$-perfect family $\mathcal{F}_{n,k}$ of hash functions from $[n]$ to $[k]$ of cardinality $e^{k + o(k)} \cdot \log n$.  Furthermore, given $n$ and $k$, a representation of the family $\mathcal{F}_{n,k}$ can be computed in time $e^{k + o(k)} \cdot  n \log n$.
\label{thm:hash-functions}
\end{thm}

Our strategy is to solve a collection of $e^{k + o(k)} \cdot \log n$ \emph{colourful enumeration problems}, one corresponding to each element of a family $\mathcal{F}$ of $k$-perfect hash functions.  In each of these problems, our goal is to enumerate all witnesses that are \emph{colourful} with respect to the relevant element $f$ of $\mathcal{F}$ (those in which each element is assigned a distinct colour by $f$).  Of course, we may discover the same witness more than once if it is colourful with respect to two distinct elements in $\mathcal{F}$, but it is straightforward to check for repeats of this kind and omit duplicate witnesses from the output.  It is essential in the algorithm that we use a deterministic construction of a $k$-perfect family of hash functions rather than the randomised construction also described in \cite{alon95}, as the latter method would allow the possibility of witnesses being omitted (with some small probability).

The advantage of solving a number of colourful enumeration problems is that we can split the problem into a number of sub-problems with the only requirement being that we preserve witnesses in which every element has a different colour (rather than all witnesses).  This makes it possible to construct a number of instances, each (roughly) half the size of the original instance, such that every colourful witness survives in at least one of the smaller instances.  More specifically, for each $k$-perfect hash function we explore a search tree: at each node, we split every colour-class randomly into (almost) equal-sized parts, and then branch to consider each of the $2^k$ combinations that includes one (nonempty) subset of each colour, provided that the union of these subsets still contains at least one witness (as determined by the decision oracle).  This simple pruning of the search tree will not prevent us exploring ``dead-ends'' (where we pursue a particular branch due to the presence of a non-colourful witness), but turns out to be sufficient to make it unlikely that we explore very many branches that do not lead to colourful witnesses.

We describe the algorithm in pseudocode (Algorithm \ref{alg:enumeration}), making use of two subroutines.  In addition to our oracle \textbf{INC-ORA}($X$,$U$,$k$), we also define a procedure \textbf{RANDPART}($X$) which we use, while exploring the search tree, to obtain a random partition of a subset of the universe.\\

\noindent
\textbf{RANDPART($X$)}\\
\hangindent = \parindent
\textit{Input:} $X \subseteq U$\\
\textit{Output:} A partition $(X_1,X_2)$ of $X$ with $\left| |X_1| - |X_2| \right| \leq 1$, chosen uniformly at random from all such partitions of $X$.
\\

\begin{algorithm}
\LinesNumbered
\If {$\mathrm{\mathbf{INC-ORA}}(U,U,k) = 1$}  {
Construct a family $\mathcal{F} = \{f_1,f_2,\ldots,f_{|\mathcal{F}|}\}$ of $k$-perfect hash functions from $U$ to $[k]$\; 

\For {$1 \leq r \leq |\mathcal{F}|$} { 
	Initialise an empty FIFO queue $Q$\;
	Insert $U$ into $Q$\;
	\While{$Q$ is not empty}{ 
		Remove the first element $A$ from $Q$\;
		\eIf {$|A| = k$} {
			\If {$A$ is not colourful with respect to $f_s$ for any $s \in \{1,\ldots,r-1\}$} {
				Output $A$\;
			}
		} 
		{  
			\For{$1 \leq i \leq k$}{
				Set $A_i$ to be the set of elements in $A$ coloured $i$ by $f_r$\;
				Set $(A_i^{(1)},A_i^{(2)}) = \mathrm{\mathbf{RANDPART}}(A_i)$\;
			}
			\For {each $\mathbf{j} = (j_1,\ldots,j_k) \in \{1,2\}^k$}{
				\If{$|A_i^{(j_{\ell})}| > 0$ for each $1 \leq \ell \leq k$} { 
					Set $A_{\mathbf{j}} = A_i^{(j_1)} \cup \cdots \cup A_i^{(j_k)}$\;
					\If {$\mathrm{\mathbf{INC-ORA}}(A_{\mathbf{j}},U,k) = 1$} { 
						Add $A_{\mathbf{j}}$ to $Q$\;
					}
				}
			}
		}
	} 
} 
}
 \caption{Randomised algorithm to enumerate all $k$-element witnesses in the universe $U$, using a decision oracle.}
 \label{alg:enumeration}
\end{algorithm}

\noindent
We prove the correctness of the algorithm and discuss the space used in Section \ref{sec:alg-correct}, and bound the expected running time in Section \ref{sec:alg-time}.

\subsection{Correctness of the algorithm}
\label{sec:alg-correct}

In order to prove that our algorithm does indeed output every witness exactly once, we begin by showing that we will identify a given $k$-element subset $X$ during the iteration corresponding to the hash-function $f \in \mathcal{F}$ if and only if $X$ is a colourful witness with respect to $f$.  

\begin{lma}
Let $X$ be a set of $k$ vertices in the universe $U$.  In the iteration of Algorithm~\ref{alg:enumeration} corresponding to $f \in \mathcal{F}$, we will execute 9 to 11 with $A = X$ if and only if:
\begin{enumerate}
\item $X$ is a witness, and
\item $X$ is colourful with respect to $f$.
\end{enumerate}
\label{lma:iff-colourful}
\end{lma}
\begin{proof}
We first argue that we only execute lines 9 to 11 with $A = X$ if $X$ is a witness and is colourful with respect to $f$.  We claim that, throughout the execution of the iteration corresponding to $f$, every subset $B$ in the queue $Q$ has the following properties:
\begin{enumerate}
\item there is some witness $W$ such that $W \subseteq B$, and
\item $B$ contains at least one vertex receiving each colour under $f$.
\end{enumerate}
Notice that we check the first condition before adding any subset $A$ to $Q$ (lines 1 and 20), and we check the second condition for any $A \neq U$ in line 18 ($U$ necessarily satisfies condition 2 by construction of $\mathcal{F}$), so these two conditions are always satisfied.  Thus, if we execute lines 9 to 11 with $A = X$, these conditions hold for $X$; note also that we only execute these lines with $A=X$ if $|X|=k$.  Hence, as there is a witness $W \subseteq X$ where $|W|=|X|=k$, we must have $X=W$ and hence $X$ is a witness.  Moreover, as $X$ must contain at least one vertex of each colour, and contains exactly $k$ elements, it must be colourful.

Conversely, suppose that $W = \{w_1,\ldots,w_k\}$ is a witness such that $f(w_i) = i$ for each $1 \leq i \leq k$; we need to show that we will at some stage execute lines 9 to 11 with $A = W$.  We argue that at the start of each execution of the while loop, if $W$ has not yet been output, there must be some subset $B$ in the queue such that $W \subseteq B$.  This invariant clearly holds before the first execution of the loop ($U$ will have been inserted into $Q$, as $U$ contains at least one witness $W$).  Now suppose that the invariant holds before starting some execution of the while loop.  Either we execute lines 9 to 11 with $A = W$ on this iteration (in which case we are done), or else we proceed to line 13.  Now, for $1 \leq i \leq k$, set $j_i$ to be either 1 or 2 in such a way that $w_i \in A_i^{(j_i)}$.  The subset $A_{\mathbf{j}}$, where $\mathbf{j} = (j_1,\ldots,j_k)$ will then pass both tests for insertion into $Q$, and $W \subseteq A_{\mathbf{j}}$ by construction, so the invariant holds when we exit the while loop.  Since the algorithm only terminates when $Q$ is empty, it follows that we must eventually execute lines 9 to 11 with $A = W$.
\end{proof}

The key property of $k$-perfect families of hash functions then implies that the algorithm will identify every witness; it remains only to ensure that we avoid outputting any witness more than once.  This is the purpose of lines 9 to 11 in the pseudocode.  We know from Lemma \ref{lma:iff-colourful} that we find a given witness $W$ while considering the hash-function $f$ if and only if $W$ is colourful with respect to $f$: thus, in order to determine whether we have found the witness in question before, it suffices to verify whether it is colourful with respect to any of the colourings previously considered.  Hence we see that every witness is output exactly once, as required.

Note that the most obvious strategy for avoiding repeats would be to maintain a list of all the witnesses we have output so far, and check for membership of this list; however, in general there might be as many as $\binom{n}{k}$ witnesses, so both storing this list and searching it would be costly.  The approach used here means that we only have to store the family $\mathcal{F}$ of $k$-perfect hash functions (requiring space $e^{k + o(k)} n \log n$).  Since each execution of the outer for loop clearly requires only polynomial space, the total space complexity of the algorithm is at most $e^{k + o(k)} n^{O(1)}$, as required.

\subsection{Expected running time}
\label{sec:alg-time}

We know from Theorem \ref{thm:hash-functions} that a family $\mathcal{F}$ of $k$-perfect hash functions from $U$ to $[k]$, with $|\mathcal{F}| = e^{k + o(k)}\log n$, can be computed in time $e^{k + o(k)}n \log n$; thus line 2 can be executed in time $e^{k + o(k)}n \log n$ and the total number of iterations of the outer for-loop (lines 2 to 34) is at most $e^{k + o(k)} \log n$.

Moreover, it is clear that each iteration of the while loop (lines 6 to 26) makes at most $2^k$ oracle calls.  If an oracle call can be executed in time $g(k)\cdot n^{O(1)}$ for some computable function $g$, then the total time required to perform each iteration of the while loop is at most $\max\{|\mathcal{F}|,kn + 2^k \cdot g(k)\cdot n^{O(1)}\} = e^{k + o(k)}\cdot g(k) \cdot n^{O(1)}.$

Thus it remains to bound the expected number of iterations of the while loop in any iteration of the outer for-loop; we do this in the next lemma.

\begin{lma}
The expected number of iterations of the while-loop in any given iteration of the outer for-loop is at most $N \left( 1 + \lceil \log n \rceil \right)$, where $N$ is the total number of witnesses in the instance.
\label{lma:while-bound}
\end{lma}
\begin{proof}
We fix an arbitrary $f \in \mathcal{F}$, and for the remainder of the proof restrict our attention to the iteration of the outer for-loop corresponding to $f$.  

We can regard this iteration of the outer for-loop as the exploration of a search tree, with each node of the search tree indexed by some subset of $U$.  The root is indexed by $U$ itself, and every node has up to $2^k$ children, each child corresponding to a different way of selecting one of the two randomly constructed subsets for each colour.  A node may have strictly fewer than $2^k$ children, as we use the oracle to prune the search tree (line 20), omitting the exploration of branches indexed by a subset of $U$ that does not contain any witness (colourful or otherwise).  Note that the search tree defined in this way has depth at most $\lceil \log n \rceil$: at each level, the size of each colour-class in the indexing subset is halved (up to integer rounding).

In this search tree model of the algorithm, each node of the search tree corresponds to an iteration of the while-loop, and vice versa.  Thus, in order to bound the expected number of iterations of the while-loop, it suffices to bound the expected number of nodes in the search tree.

Our oracle-based pruning method means that we can associate with every node $v$ of the search tree some representative witness $W_v$ (not necessarily colourful), such that $W_v$ is entirely contained in the subset of $U$ which indexes $v$.  (Note that the choice of representative witness for a given node need not be unique.)  We know that in total there are $N$ witnesses; our strategy is to bound the expected number of nodes, at each level of the search tree, for which any given witness can be the representative.

For a given witness $W$, we define a random variable $X_{W,d}$ to be the number of nodes at depth $d$ (where the root has depth 0, and children of the root have depth 1, etc.) for which $W$ could be the representative witness.  Since every node has some representative witness, it follows that the total number of nodes in the search tree is at most 
$$\sum_{W \text{ a witness}} \quad \sum_{d = 0}^{\lceil \log n \rceil} X_{W,d}.$$
Hence, by linearity of expectation, the expected number of nodes in the search tree is at most
$$\sum_{W \text{ a witness}} \quad \sum_{d = 0}^{\lceil \log n \rceil} \mathbb{E}\left[X_{W,d}\right] \quad \leq \quad N \sum_{d = 0}^{\lceil \log n \rceil} \max_{W \text{ a witness}} \mathbb{E}\left[X_{W,d}\right].$$

In the remainder of the proof, we argue that $\mathbb{E}[X_{W,d}] \leq 1$ for all $W$ and $d$, which will give the required result.

Observe first that, if $W$ is in fact a colourful witness with respect to $f$, then $X_{W,d} = 1$ for every $d$: given a node whose indexing set contains $W$, exactly one of its children will be indexed by a set that contains $W$.  So we will assume from now on that $W$ intersects precisely $\ell$ colour classes, where $\ell < k$.

If a given node is indexed by a set that contains $W$, we claim that the probability that $W$ is contained in the set indexing at least one of its children is at most $\frac{1}{2}^{k - \ell}$.  For this to happen, it must be that for each colour $i$, all elements of $W$ having colour $i$ are assigned to the same set in the random partition.  If $c_i$ elements in $W$ have colour $i$, the probability of this happening for colour $i$ is at most $\left(\frac{1}{2}\right)^{c_i - 1}$ (the first vertex of colour $i$ can be assigned to either set, and each subsequent vertex has probability at most $\frac{1}{2}$ of being assigned to this same set).  Since the random partitions for each colour class are independent, the probability that the witness $W$ survives is at most 
$$\prod_{W \cap f^{-1}(i) \neq \emptyset} \left( \frac{1}{2} \right)^{c_i - 1} = \left( \frac{1}{2} \right)^{k - |\{i: W \cap f^{-1}(i) \neq \emptyset\}|} = \left( \frac{1}{2} \right)^{k-\ell}.$$
Moreover, if $W$ is contained in the set indexing at least one of the child nodes, it will be contained in the indexing sets for exactly $2^{k - \ell}$ child nodes: we must select the correct subset for each colour-class that intersects $W$, and can choose arbitrarily for the remaining $k - \ell$ colour classes.  Hence, for each node indexed by a set that contains $W$, the \emph{expected} number of children which are also indexed by sets containing $W$ is at most $\left(\frac{1}{2} \right)^{k-\ell} \cdot 2^{k-\ell} = 1$.

We now prove by induction on $d$ that $\mathbb{E}\left[X_{W,d}\right] \leq 1$ (in the case that $W$ is not colourful).  The base case for $d = 0$ is trivial (as there can only be one node at depth $0$), so suppose that $d>0$ and that the result holds for smaller values.  Then, if $\mathbb{E}[Y|Z=s]$ is the conditional expectation of $Y$ given that $Z=s$,
\begin{align*}
\mathbb{E}[X_{W,d}] & = \sum_{t \geq 0} \mathbb{E}[X_{W,d}|X_{W,d-1} = t] \; \mathbb{P}[X_{W,d-1} = t] \\
					& \leq \sum_{t \geq 0} t \; \mathbb{P}[X_{W,d-1}=t] \\
					&  = \mathbb{E}[X_{W,d-1}]  \\
					& \leq 1,
\end{align*}
by the inductive hypothesis, as required.  Hence $\mathbb{E}[X_{W,d}] \leq 1$ for \emph{any} witness $W$, which completes the proof.
\end{proof}

By linearity of expectation, it then follows that the expected total number of executions of the while loop will be at most $|\mathcal{F}|\cdot N \left( 1 + \lceil \log n \rceil \right)$, and hence that the expected number of oracle calls made during the execution of the algorithm is at most $e^{k + o(k)} \log^2 n \cdot N$.  Moreover, if an oracle call can be executed in time $g(k)\cdot n^{O(1)}$ for some computable function $g$, then the expected total running time of the algorithm is 
$$e^{k + o(k)} \cdot g(k) \cdot n^{O(1)} \cdot N,$$
as required.

\section{Using a randomised oracle}
\label{sec:approx}

In this section we show that the method described in Section \ref{sec:algorithm} will in fact work almost as well if we only have access to a randomised decision oracle, thus proving Theorem \ref{thm:enumeration-random}.  The randomised decision procedures in \cite{bjorklund10,bjorklund13} only have one-sided errors, but for the sake of generality we consider the effect of both false positives and false negatives on our algorithm.

False positives and false negatives will affect the behaviour of the algorithm in different ways.  If the decision procedure gives false positives then, provided we add a check immediately before outputting a supposed witness that it really is a witness, the algorithm is still sure to output every witness exactly once; however, we will potentially waste time exploring unfruitful branches of the search tree due to false positives, so the expected running time of the algorithm will increase.  If, on the other hand, our algorithm outputs false negatives, then this will not increase the expected running time; however, in this case, we can no longer be sure that we will find every witness as false negatives might cause us to prune the search tree prematurely.  We will show, however, that we can still enumerate approximately in this case.

Before turning our attention to the specific effects of false positives and false negatives on the algorithm, we observe that, provided our randomised oracle returns the correct answer with probability greater than a half, we can obtain a decision procedure with a much smaller failure probability by making repeated oracle calls.  We make the standard assumption that the events corresponding to each oracle call returning an error are independent.

\begin{lma}
Let $c > \frac{1}{2}$ be a fixed constant, and let $\epsilon > 0$.  Suppose that we have access to a randomised oracle for the decision version of a self-contained $k$-witness problem which, on each call, returns the correct answer independently with probability at least $c$.  Then there is a decision procedure for the problem, making $O \left( k + \log \log n + \log \epsilon^{-1} \right)$ calls to this oracle, such that:
\begin{enumerate}
\item the probability of obtaining a false positive is at most $2^{-k}$, and
\item the probability of obtaining a false negative is at most $\frac{\epsilon}{\lceil \log n \rceil + 1}$
\end{enumerate}
\end{lma}
\begin{proof}
Our procedure is as follows: we make $t$ oracle calls (where $t$ is a value to be determined later) and output whatever is the majority answer from these calls.  We need to choose $t$ large enough to ensure that the probability that the majority answer is incorrect is at most $\delta :=\min \left\lbrace 2^{-k}, \frac{\epsilon}{\lceil \log n \rceil + 1}\right\rbrace$.

The probability that we obtain the correct answer from a given oracle call is at least $c$, so the number of correct answers we obtain out of $t$ trials is bounded below by the random variable $X$, where $X$ has distribution $\bin(t,c)$.  Thus $\mathbb{E}[X] = tc$.  We will return the correct answer so long as $X > \frac{t}{2}$.

Using a Chernoff bound, we can see that 
\begin{align*}
\mathbb{P}\left[ X \leq \frac{t}{2} \right] & = \mathbb{P} \left[ X \leq tc \cdot \frac{1}{2c} \right] \\
	&= \mathbb{P} \left[ X \leq tc \left( 1 - \frac{2c - 1}{2c}\right)\right] \\
	& \leq \exp \left(- \frac{1}{2}\left(\frac{2c - 1}{2c}\right)^2tc \right) \\
	& = \exp \left(- \frac{(2c - 1)^2t}{8c} \right).
\end{align*}
It is enough to ensure that this is at most $\delta$, which we achieve if
\begin{align*}
- \frac{(2c-1)^2t}{8c} & < \ln \delta \\
\Longleftrightarrow \qquad t & > \frac{- 8 c \ln \delta}{(2c - 1)^2},
\end{align*}
so we can take $t = O(\log \delta^{-1})$.  Thus the number of oracle calls required is 
\begin{align*}
O\left( \max \left\lbrace \log \left( 2^{-k}\right)^{-1}, \log \left(\frac{\epsilon}{\lceil \log n \rceil + 1}\right)^{-1} \right\rbrace \right) 
	&= O \left( k + \log \log n + \log \epsilon^{-1} \right),
\end{align*}
as required.
\end{proof}

We now show that, if the probability that our oracle gives a false positive is sufficiently small, then such errors do not increase the expected running time of Algorithm \ref{alg:enumeration} too much.  Just as when bounding the expected running time in Section \ref{sec:alg-time}, it suffices to bound the expected number of iterations of the while loop corresponding to a specific colouring $f$ in our family $\mathcal{F}$ of hash functions.

\begin{lma}
Suppose that the probability that the oracle returns a false positive is at most $\min \left \lbrace 2^{-k}, \frac{1}{\lceil \log n \rceil + 1} \right \rbrace$.  Then the expected number of iterations of the while-loop in any given iteration of the outer for-loop is at most $O(N \cdot \log^2n)$, where $N$ is the total number of witnesses in the instance.
\end{lma}
\begin{proof}
We fix an arbitrary $f \in \mathcal{F}$, and for the remainder of the proof we restrict our attention or the iteration of the outer for-loop corresponding to $f$.  As in the proof of Lemma \ref{lma:while-bound}, we can regard this iteration of the outer for-loop as the exploration of a search tree, and it suffices to bound the expected number of nodes in the search tree.  

We can associate with each node of the search tree some subset of the universe, and we prune the search tree in such a way that we only have a node corresponding to a subset $A$ of the universe if a call to the oracle with input $A$ has returned yes.  This means that for the node corresponding to the set $A$, either there is some representative witness $W \subseteq X$, or the oracle gave a false positive.  We call a node \emph{good} if it has some representative witness, and \emph{bad} if it is the result of a false positive.  We already bounded the expected number of good nodes in the proof of Lemma \ref{lma:while-bound}, so it remains to show that the expected number of bad nodes is not too large.

We will assume initially that there is at least one witness, and so the root of the search tree is a good node.  Now consider a bad node $v$ in the search tree; $v$ must have some ancestor $u$ in the search tree such that $u$ is good (note that the root will always be such an ancestor in this case).  Since the subset of the universe associated with any node is a subset of that associated with its parent, no bad vertex can have a good descendant.  Thus, any path from the root to the bad node $v$ must consist of a segment of good nodes followed by a segment of bad nodes; we can therefore associate with every bad node $v$ a unique good node $\good(v)$ such that $\good(v)$ is the last good node on the path from the root to $v$.  In order to bound the expected number of bad nodes in the tree, our strategy is to bound, for each good node $u$, the number of bad nodes $v$ such that $\good(v) = u$.

As in the proof of Lemma \ref{lma:while-bound}, we will write $X_{W,d}$ for the number of nodes at depth $d$ for which $W$ is the representative witness.  For $c > d$, we further define $Y_{W,d,c}$ to be the number of bad nodes $v$ such that $v$ is at depth $c$, $\good(v)$ is at depth $d$, and $W$ is the representative witness for $\good(v)$.

Since every node can have at most $2^k$ children, and the probability that the oracle gives a false positive is at most $2^{-k}$, the expected number of bad children of any node is at most one.  Thus we see that
\begin{align*}
\mathbb{E} \left[Y_{W,d,d+1}\right] & = \sum_{t \geq 0} \mathbb{E} \left[ Y_{W,d,d+1} | X_{W,d} = t \right] \mathbb{P} \left[ X_{W,d} = t \right] \\
	& \leq \sum_{t \geq 0} t \mathbb{P} \left[ W_{W,d} = t\right] \\
	& = \mathbb{E}\left[X_{W,d}\right].
\end{align*}
Observe also that if $u$ and $w$ are bad nodes such that $u$ is the child of $w$, then $\good(u) = \good(w)$ (and so $\good(u)$ and $\good(w)$ are at the same depth and have the same representative witness).  For $c > d+1$ we can then argue inductively:
\begin{align*}
\mathbb{E}\left[Y_{W,d,c}\right] &= \sum_{t \geq 0} \mathbb{E} \left[ Y_{W,d,c} | Y_{W,d,c-1} = t \right] \mathbb{P} \left[ Y_{W,d,c-1} = t \right] \\
	& = \sum_{t \geq 0} t \mathbb{P}\left[Y_{W,d,c-1} = t \right] \\
	& = \mathbb{E} \left[ Y_{W,d,c-1} \right] \\
	& = \mathbb{E} \left[ X_{W,d} \right].
\end{align*}
We can therefore bound the expected number of nodes in the search tree by
\begin{align*}
\mathbb{E} & \left[ \sum_{W \text{ a witness}} \sum_{d = 0}^{\lceil \log n \rceil} \left( X_{W,d} + \sum_{c=d+1}^{\lceil \log n \rceil} Y_{W,d,c} \right) \right] \\
& \qquad \qquad = \sum_{W \text{ a witness}} \sum_{d = 0}^{\lceil \log n \rceil} \left( \mathbb{E}\left[X_{W,d}\right] + \sum_{c=d+1}^{\lceil \log n \rceil} \mathbb{E} \left[ Y_{W,d,c} \right] \right) \\
& \qquad \qquad = \sum_{W \text{ a witness}} \sum_{d = 0}^{\lceil \log n \rceil} \left( \mathbb{E}\left[X_{W,d}\right] + \sum_{c=d+1}^{\lceil \log n \rceil} \mathbb{E} \left[ X_{W,d} \right] \right) \\
& \qquad \qquad = \sum_{W \text{ a witness}} \sum_{d = 0}^{\lceil \log n \rceil} \left(\lceil \log n \rceil - d + 1 \right) \mathbb{E}\left[X_{W,d}\right].
\end{align*}
As we know from the proof of Lemma \ref{lma:while-bound} that $\mathbb{E}\left[X_{W,d}\right] \leq 1$, we can therefore deduce that the expected number of nodes is at most
\begin{align*}
\sum_{W \text{ a witness}} \sum_{d = 0}^{\lceil \log n \rceil} \left(\lceil \log n \rceil - d + 1 \right) &= N \sum_{i=1}^{\lceil \log n \rceil + 1} i \\
& = \frac{N}{2} \left( \lceil \log n \rceil + 1 \right) \left( \lceil \log n \rceil + 2 \right)\\
& = O(N \log^2 n),
\end{align*}
as required.  This completes the proof in the case that the instance contains at least one witness.

If there is in fact no witness in the instance, we know that there are no good nodes in the tree.  Moreover, the expected number of bad nodes at depth $0$ is at most $1/\left(\lceil \log n \rceil + 1 \right)$ (the probability that the oracle returns a false positive).  Since we have already argued that the expected number of bad children of any node is at most $1$, it follows that the expected number of bad nodes at each level is at most $1/\left(\lceil \log n \rceil + 1 \right)$, and so the total expected number of bad nodes is at most $1/\left(\lceil \log n \rceil + 1 \right) \left(1 + \lceil \log n \rceil\right) = 1$.
\end{proof}

To complete the proof of Theorem \ref{thm:enumeration-random}, it remains to show that, so long as the probability that the oracle returns a false negative is sufficiently small, our algorithm will output any given witness with high probability.

\begin{lma}
Fix $\epsilon \in (0,1)$, and suppose that the probability that the oracle returns a false negative is at most $\frac{\epsilon}{\lceil \log n \rceil + 1}$.  Then, for any witness $W$, the probability that the algorithm does not output $W$ is at most $\epsilon$.
\end{lma}
\begin{proof}
By construction of $\mathcal{F}$, we know that there is some $f \in \mathcal{F}$ such that $W$ is colourful with respect to $f$.  We wil now restrict our attention to the iteration of the outer for-loop corresponding to $f$; it suffices to demonstrate that we will output $W$ during this iteration with probability at least $1 - \epsilon$.

If we obtain the correct answer from each oracle call, we are sure to output $W$.  The only way we will fail to output $W$ is if our oracle gives us an incorrect answer on at least one occasion when it is called with input $V \supseteq W$.  This can either happen in line 1 when we make the initial check that we have a yes-instance, or when we check whether a subset is still a yes-instance in line 20.  Note that we execute line 20 with $A_j = W$ at most $\lceil \log n \rceil$ times, so the total number of times we call \textbf{INC-ORA}($V$,$U$,$k$) with some $V \supseteq W$ during the iteration of the outer for-loop corresponding to $f$ is at most $\lceil \log n \rceil + 1$.  By the union bound, the probability that we obtain a false negative on at least one of these calls is at most 
$$\left( \lceil \log n \rceil + 1 \right) \cdot \frac{\epsilon}{\lceil \log n \rceil + 1} = \epsilon,$$
as required.
\end{proof}

\section{Application to counting}
\label{sec:counting}

There is a close relationship between the problems of counting and enumerating all witnesses in a $k$-witness problem, since any enumeration algorithm can very easily be adapted into an algorithm that counts the witnesses instead of listing them.  However, in the case that the number $N$ of witnesses is large, an enumeration algorithm necessarily takes time at least $\Omega(N)$, whereas we might hope for much better if our goal is simply to determine the total number of witnesses.

The family of self-contained $k$-witness problems studied here includes subgraph problems, whose parameterised complexity from the point of view of counting has been a rich topic for research in recent years \cite{flum04,connected,bddlayers,radu13,radu14,treewidth,even}.  Many such counting problems, including those whose decision problem belongs to FPT, are known to be \#W[1]-complete (see \cite{flumgrohe} for background on the theory of parameterised counting complexity).    Positive results in this setting typically exploit structural properties of the graphs involved (e.g. small treewidth) to design (approximate) counting algorithms for inputs with these properties, avoiding any dependence on $N$ \cite{alon08,arvind02,connected}.

In this section we demonstrate how our enumeration algorithms can be adapted to give efficient (randomised) algorithms to solve the counting version of a self-contained $k$-witness problem \emph{whenever the total number of witnesses is small}.  This complements the fact that a simple random sampling algorithm can be used for \emph{approximate} counting when the number of witnesses is very large \cite[Lemma 3.4]{treewidth}, although there remain many situations which are not covered by either result. 

We begin with the case in which we assume access to a deterministic oracle for the decision problem. 

\begin{thm}
Let $\Pi$ be a self-contained $k$-witness problem, and suppose that $0 < \delta \leq \frac{1}{2}$ and $M \in \mathbb{N}$.  Then there exists a randomised algorithm which makes at most $e^{k+o(k)} \log^2 n~M \log(\delta^{-1})$ calls to a deterministic decision oracle for $\Pi$, and
\begin{enumerate}
\item if the number of witnesses in the instance of $\Pi$ is at most $M$, outputs with probability at least $1 - \delta$ the exact number of witnesses in the instance;
\item if the number of witnesses in the instance of $\Pi$ is strictly greater than $M$, always outputs ``More than $M$.''
\end{enumerate}
Moreover, if there is an algorithm solving the decision version of $\Pi$ in time $g(k)\cdot n^{O(1)}$ for some computable function $g$, then the expected running time of the randomised algorithm is bounded by $e^{k + o(k)}\cdot g(k) \cdot n^{O(1)} \cdot M \cdot \log(\delta^{-1})$.
\label{thm:counting}
\end{thm}
\begin{proof}
Note that Algorithm \ref{alg:enumeration} can very easily be adapted to give a randomised counting algorithm which runs in the same time as the enumeration algorithm but, instead of listing all witnesses, simply outputs the total number of witnesses when it terminates.  We may compute explicitly the expected running time of our randomised enumeration algorithm (and hence its adaptation to a counting algorithm) for a given self-contained $k$-witness problem $\Pi$ in terms of $n$, $k$ and the total number of witnesses, $N$.  We will write $T(\Pi,n,k,N)$ for this expected running time.  

Now consider an algorithm $A$, in which we run our randomised counting algorithm for at most $2T(\Pi,n,k,M)$ steps; if the algorithm has terminated within this many steps, $A$ outputs the value returned, otherwise $A$ outputs ``FAIL''.  Since our randomised counting algorithm is always correct (but may take much longer than the expected time), we know that if $A$ outputs a numerical value then this is precisely the number of witnesses in our problem instance.  If the number of witnesses is in fact at most $M$, then the expected running time of the randomised counting algorithm is bounded by $T(\Pi,n,k,M)$, so by Markov's inequality the probability that it terminates within $2T(\Pi,n,k,M)$ steps is at least $1/2$.  Thus, if we run $A$ on an instance in which the number of witnesses is at most $M$, it will output the exact number of witnesses with probability at least $1/2$.

To obtain the desired probability of outputting the correct answer, we repeat $A$ a total of $\lceil\log(\delta^{-1})\rceil$ times.  If any of these executions of $A$ terminates with a numerical answer that is at most $M$, we output this answer (which must be the exact number of witnesses by the argument above); otherwise we output ``More than $M$.''

If the total number of witnesses is in fact less than or equal to $M$, we will output the exact number of witnesses unless $A$ outputs ``FAIL'' every time it is run.  Since in this case $A$ outputs ``FAIL'' independently with probability at most $1/2$ each time we run it, the probability that we output ``FAIL'' on every one of the $\lceil\log(\delta^{-1})\rceil$ repetitions is at most $(1/2)^{\lceil\log(\delta^{-1})\rceil} \leq 2^{\log \delta} = \delta$.  Finally, note that if the number of witnesses is strictly greater than $M$, we will always output ``More than $M$'' since every execution of $A$ must in this case return either ``FAIL'' or a numerical answer greater than $M$.

The total running time is at most $O\left(\log(\delta^{-1}) \cdot T(\Pi,n,k,M)\right)$ and hence, using the bound on the running time of our enumeration algorithm from Theorem \ref{thm:enumeration}, is bounded by $e^{k + o(k)}\cdot g(k) \cdot n^{O(1)} \cdot M \cdot \log(\delta^{-1})$, as required.
\end{proof}

Finally, we prove an analogous result in the case that we only have access to a randomised oracle.

\begin{thm}
Let $\Pi$ be a self-contained $k$-witness problem, suppose that $0 <\epsilon < 1$, $0 < \delta \leq \frac{1}{2}$ and $M \in \mathbb{N}$, and that we have access to a randomised oracle for the decision problem whose error probability is at most some constant $c < \frac{1}{2}$.  Then there exists a randomised algorithm which makes at most $e^{k + o(k)} \log^3 n~M \log(\delta^{-1})$ calls to this oracle and, with probability at least $1 - \delta$, if the total number of witnesses in the instance is exactly $N$, does the following:
\begin{enumerate}
\item if $N \leq M$, outputs a number $N'$ such that $(1-\epsilon)N \leq N' \leq N$;
\item if $N \geq M$, outputs either a number $N'$ such that $(1 - \epsilon) N \leq N' \leq M$ or ``More than $M$.''
\end{enumerate}
Moreover, if there is a randomised algorithm solving the decision version of $\Pi$ ~(with error probability at most $c < \frac{1}{2}$) in time $g(k)\cdot n^{O(1)}$ for some computable function $g$, then the expected running time of the randomised counting algorithm is bounded by $e^{k + o(k)}\cdot g(k) \cdot n^{O(1)} \cdot M \cdot \log(\delta^{-1})$.
\label{thm:counting-random}
\end{thm}
\begin{proof}
We claim that it suffices to demonstrate that there is a procedure which makes at most $e^{k + o(k)} \cdot \log^3 n \cdot M$ oracle calls and, with probability greater than $\frac{1}{2}$, outputs
\begin{enumerate}[label = (\alph*)]
\item a number $N'$ such that $(1 - \epsilon) N \leq N' \leq N$ if $N \leq M$, and
\item either a number $N'$ such that $(1-\epsilon)N \leq N' \leq N$ or ``FAIL'' if $N > M$.
\end{enumerate}
Given such a procedure, we run it $\log(\delta^{-1})$ times; if the largest numerical value returned on any run (if any) is at most $M$ then we return this maximum value, otherwise we return ``More than $M$.''  Conditions (a) and (b) ensuer that the procedure never returns a value strictly greater than $N$, so the largest numerical value returned (if any) is sure to be the best estimate.  Therefore we only return an answer that does not meet the conditions of the theorem if \emph{all} of the executions of the procedure fail to return an answer that meets conditions (a) and (b), which happens with probability at most $2^{- \log(\delta^{-1})} = \delta$.

To obtain the required procedure, we modify the enumeration algorithm used to prove Theorem \ref{thm:enumeration-random} so that it counts the total number of witnesses found rather than listing them; we will run this randomised enumeration procedure with error probability $\epsilon^2/4$.  We can compute explicitly the expected running time of this adapted algorithm for a given $k$-witness problem $\Pi$ in terms of $n$, $k$, $N$ and $\epsilon$; we write $T(\Pi, n, k, \epsilon, N)$ for this expectation.  We will allow the adapted algorithm to run for time $\displaystyle 4 T(\Pi,n,k,\epsilon,M)$, outputting ``FAIL'' if we have not terminated within this time.

There are two ways in which the procedure could fail to meet conditions (a) and (b).  First of all, the adapted enumeration algorithm might not terminate within the required time.  Secondly, it might terminate but with an answer $N'$ where $N' < (1-\epsilon) N$ (recall that enumeration algorithm never repeats a witness, and that we can verify each witness deterministically, ensuring that only ever output a subset of the witnesses actually present in the instance).  In the remainder of the proof, we will argue that the probability of each of these two outcomes is strictly less than $\frac{1}{4}$, so the probability of avoiding both is greater than $\frac{1}{2}$, as required.

First, we bound the probability that the algorithm does not terminate within the required time.  By Markov's inequality, the probability that a random variable takes a value greater than four times its expectation is less than $\frac{1}{4}$, so we see immediately that if the total number of witnesses is at most $M$ then the probability that the algorithm fails to terminate within the permitted time is less than $\frac{1}{4}$.  

Next, we need to bound the probability that the procedure outputs a value $N' < (1 - \epsilon) N$.  Let the random variable $Z$ denote the number of witnesses omitted by the procedure.  Then $\mathbb{E}[Z] \leq \epsilon^2 N/4$, so by Markov's inequality we have
$$\mathbb{P}[ Z > \epsilon N] \leq \frac{\epsilon^2 N /4}{\epsilon N} = \frac{\epsilon}{4} < \frac{1}{4},$$
as required.  This completes the argument that the procedure outputs the an answer that meets conditions (a) and (b) with probability greater than $\frac{1}{2}$, and hence the proof.
\end{proof}

\section{Conclusions and open problems}

Many well-known combinatorial problems satisfy the definition of the $k$-witness problems considered in this paper.  We have shown that, given access to a deterministic decision oracle for the inclusion version of a $k$-witness problem (answering the question ``does this subset of the universe contain at least one witness?''), there is a randomised algorithm which is guaranteed to enumerate all witnesses and whose expected number of oracle calls is at most $e^{k + o(k)} \log^2 n \cdot N$, where $N$ is the total number of witnesses.  Moreover, if the decision problem belongs to FPT (as is the case for many self-contained $k$-witness problems), our enumeration algorithm is an expected-output-fpt algorithm. 

We have also shown that, in the presence of only a randomised decision oracle, we can use the same strategy to produce a list of witnesses so that the probability of any given witness appearing in the list is at least $1 - \epsilon$, with only a factor $\log n$ increase in the expected running time.  This result initiates the study of algorithms for \emph{approximate} enumeration. 

Our results also has implications for counting the number of witnesses.  In particular, if the total number of witnesses is small (at most $f(k) \cdot n^{O(1)}$ for some computable function $f$) then our enumeration algorithms can easily be adapted to give fpt-algorithms that will, with high probability, calculate a good approximation to the number of witnesses in an instance of a self-contained $k$-witness problem (in the setting where we have a deterministic decision oracle, we in fact obtain the exact answer with high probability).  The resulting counting algorithms satisfy the conditions for a FPTRAS (Fixed Parameter Tractable Randomised Approximation Scheme, as defined in \cite{arvind02}), and in the setting with a deterministic oracle we do not even need the full flexibility that this definition allows: with probability $1 - \delta$ we will output the exact number of witnesses, rather than just an answer that is within a factor of $1 \pm \epsilon$ of this quantity.

While the enumeration problem can be solved in a more straightforward fashion for self-contained $k$-witness problems that have certain additional properties, we demonstrated that several self-contained $k$-witness problems do not have these properties, unless FPT=W[1].  A natural line of enquiry arising from this work would be the characterisation of those self-contained $k$-witness problems that do have the additional properties, namely those for which an fpt-algorithm for the decision version gives rise to an fpt-algorithm for the extension version of the decision problem. 

Our approach assumed the existence of an oracle to determine whether a given subset of the universe contains a witness of size \emph{exactly} $k$.  An interesting direction for future work would be to explore the extent to which the same techniques can be used if we only have access to a decision procedure that tells us whether some subset of the universe contains a witness of size \emph{at most} $k$.

Another key question that remains open after this work is whether the existence of an fpt-algorithm for the inclusion version of a $k$-witness problem is sufficient to guarantee the existence of an (expected-)delay-fpt or (expected-)incremental-fpt algorithm for the enumeration problem.  Finally, it would be interesting to investigate whether the randomised algorithm given here can be derandomised.


\begin{thebibliography}{10}

\bibitem{alon95}
Noga Alon, Raphael Yuster, and Uri Zwick, \emph{Color-coding}, Journal of the
  ACM \textbf{42} (1995), no.~4, 844--856.
    
\bibitem{alon08}
N.~Alon, P.~Dao, I.~Hajirasouliha, F.~Hormozdiari, S.~C.~Sahinalp, \emph{Biomolecular network motif counting and discovery by color coding}, Bioinformatics \textbf{24}(13), 241--249, 2008.

\bibitem{arvind02}
V.~Arvind and Venkatesh Raman, \emph{Approximation algorithms for some
  parameterized counting problems}, ISAAC 2002, LNCS, vol. 2518,
  Springer-Verlag Berlin Heidelberg, 2002, pp.~453--464.

\bibitem{bjorklund10}
Andreas Bj\"{o}rklund, Thore Husfeldt, Petteri Kaski, and Mikko Koivisto,
  \emph{Narrow sieves for parameterized paths and packings}, arXiv:1007.1161
  [cs.DS], 2010.

\bibitem{bjorklund13}
Andreas Bj{\"o}rklund, Petteri Kaski, and \L{}ukasz Kowalik, \emph{{Probably
  Optimal Graph Motifs}}, 30th International Symposium on Theoretical Aspects
  of Computer Science (STACS 2013), LIPIcs, vol.~20, Schloss
  Dagstuhl--Leibniz-Zentrum fuer Informatik, Dagstuhl, Germany, 2013,
  pp.~20--31.

\bibitem{bjorklund14}
Andreas Bj\"{o}rklund, Petteri Kaski, and \L{}ukasz Kowalik, \emph{Fast witness
  extraction using a decision oracle}, Algorithms (ESA 2014), LNCS, vol. 8737,
  Springer Berlin Heidelberg, 2014, pp.~149--160.

\bibitem{bjorklund15}
Andreas Bj\"{o}rklund, Petteri Kaski, \L{}ukasz Kowalik, and Juho Lauri,
  \emph{Engineering motif search for large graphs}, 2015 Proc. of the
  Seventeenth Workshop on Algorithm Engineering and Experiments (ALENEX 2015), SIAM,
  2015, pp.~104--118.

\bibitem{creignou15}
Nadia Creignou, Ra\"{i}da Ktari, Arne Meier, Julian-Steffen M\"{u}ller, Fr\'{e}d\'{e}ric
  Olive, and Heribert Vollmer, \emph{Parameterized enumeration for modification
  problems}, Language and Automata Theory and Applications (LATA 2015), LNCS, vol. 8977,
  Springer International Publishing, 2015, pp.~524--536.

\bibitem{creignou13}
Nadia Creignou, Arne Meier, Julian-Steffen M\"{u}ller, Johannes Schmidt, and
  Heribert Vollmer, \emph{Paradigms for parameterized enumeration},
  Mathematical Foundations of Computer Science (MFCS 2013), LNCS, vol. 8087, Springer
  Berlin Heidelberg, 2013, pp.~290--301.

\bibitem{radu13}
Radu Curticapean, \emph{Counting matchings of size k is \#{W}[1]-hard},
  Automata, Languages, and Programming (ICALP 2013), LNCS, vol. 7965, Springer Berlin
  Heidelberg, 2013, pp.~352--363.

\bibitem{radu14}
Radu Curticapean and D{\'a}niel Marx, \emph{Complexity of counting subgraphs:
  Only the boundedness of the vertex-cover number counts}, 55th Annual IEEE
  Symposium on Foundations of Computer Science (FOCS 2014), 2014.

\bibitem{downeyfellows13}
Rodney~G. Downey and Michael~R. Fellows, \emph{Fundamentals of parameterized
  complexity}, Springer London, 2013.

\bibitem{fernau02}
Henning Fernau, \emph{On parameterized enumeration}, Computing and
  Combinatorics (COCOON 2002), LNCS, vol. 2387, Springer Berlin Heidelberg, 2002,
  pp.~564--573.

\bibitem{flum04}
J.~Flum and M.~Grohe, \emph{The parameterized complexity of counting problems},
  SIAM Journal on Computing \textbf{33} (2004), no.~4, 892--922.

\bibitem{flumgrohe}
J.~Flum and M.~Grohe, \emph{Parameterized complexity theory}, Springer, 2006.

\bibitem{gelbord01}
B.~Gelbord, \emph{Graphical techniques in intrusion detection systems},
  Information Networking, 2001. Proc. 15th International Conference on, 2001,
  pp.~253--258.

\bibitem{even}
Mark Jerrum and Kitty Meeks, \emph{The parameterised complexity of counting
  even and odd induced subgraphs}, Combinatorica, 2016, doi:10.1007/s00493-016-3338-5.

\bibitem{connected}
Mark Jerrum and Kitty Meeks, \emph{The parameterised complexity of counting connected subgraphs and
  graph motifs}, Journal of Computer and System Sciences \textbf{81} (2015),
  no.~4, 702 -- 716.

\bibitem{bddlayers}
Mark Jerrum and Kitty Meeks, \emph{Some hard families of parameterised counting problems}, ACM
  Transactions on Computation Theory \textbf{7} (2015), no.~3.

\bibitem{khuller91}
Samir Khuller and Vijay~V. Vazirani, \emph{Planar graph coloring is not
  self-reducible, assuming {P $\neq$ NP}}, Theoretical Computer Science
  \textbf{88} (1991), no.~1, 183 -- 189.

\bibitem{lawler72}
Eugene~L. Lawler, \emph{A procedure for computing the $k$ best solutions to
  discrete optimization problems and its application to the shortest path
  problem}, Management Science \textbf{18} (1972), no.~7, 401--405.

\bibitem{treewidth}
Kitty Meeks, \emph{The challenges of unbounded treewidth in parameterised
  subgraph counting problems}, Discrete Applied Mathematics \textbf{198}
  (2016), 170 -- 194.
  
\bibitem{milo02}
R.~Milo, S.~Shen-Orr, S.~Itzkovitz, N.~Kashtan, D.~Chklovskii, and U.~Alon,
  \emph{Network motifs: Simple building blocks of complex networks}, Science
  \textbf{298} (2002), no.~5594, 824--827.
  
\bibitem{naor95}
M.~Naor, L.~J.~Schulman and A.~Srinivasan, \emph{Splitters and near-optimal derandomization}, Proceedings of IEEE 36th Annual Foundations of Computer Science (FOCS 1995), Milwaukee, WI, 1995, pp. 182-191.
doi: 10.1109/SFCS.1995.492475
  
\bibitem{sekar04}
V.~Sekar, Y.~Xie, D.A. Maltz, M.K. Reiter, and H.~Zhang, \emph{Toward a
  framework for internet forensic analysis}, Third Workshop on Hot Topics in
  Networking (HotNets-III), 2004.
  
\bibitem{staniford-chen96}
S.~Staniford-Chen, S.~Cheung, R.~Crawford, M.~Dilger, J.~Frank, J.~Hoagland,
  K.~Levitt, C.~Wee, R.~Yip, and D.~Zerkle, \emph{{GrIDS} - A graph based
  intrusion detection system for large networks}, In Proc. of the 19th National
  Information Systems Security Conference, 1996, pp.~361--370.

\bibitem{schnorr76}
C.P. Schnorr, \emph{Optimal algorithms for self-reducible problems}, Proc. of
  the 3rd ICALP, Edinburgh University Press, 1976, pp.~322 -- 337.

\end{thebibliography}
\end{document}